\def\vN{\mathbb{N}}
\def\vZ{\mathbb{Z}}
\def\vR{\mathbb{R}}
\def\vF{\mathbb{F}}
\newcommand{\Z}{\mathbbm{Z}}
\newcommand{\qsp}[2]{\,\ensuremath{\raise.5ex\hbox{$#1$}\big\slash\raise-.5ex\hbox{$#2$}}} 
\newtheorem{teo}{Theorem}
\newtheorem{problem}{Problem}
\newtheorem{prop}[teo]{Proposition}
\newtheorem{lemma}[teo]{Lemma}
\newtheorem{definition}{Definition}
\theoremstyle{definition}
\newtheorem{remark}{Remark}
\title{A general construction for monoid-based knapsack protocols}
\author{G. Micheli, M. Schiavina \\ Institut f\"ur Mathematik, \\Universit\"at Z\"urich}
\date{}
\begin{document}
\maketitle

\begin{abstract}
We present a generalized version of the knapsack protocol proposed by D. Naccache and J. Stern at the Proceedings of Eurocrypt (1997). Our new framework will allow the construction of other knapsack protocols having similar security features.
We will outline a very concrete example of a new protocol using extension fields of a finite field of small characteristic instead of the prime field $\vZ/p\vZ$,  but more efficient in terms of computational costs for { asymptotically equal information rate and similar key size}.
\end{abstract}

\section{Introduction}
Building new asymmetric encryption schemes has always been one of the main goals of cryptographers. After the idea of public key cryptography was presented in \cite{DH}, only few more public key encryption schemes were developed such as the  RSA  \cite{RSA}, the El Gamal \cite{ ELG}, the McEliece cryptosystem \cite{MCE}, the NTRU \cite{NTRU} or the HFE \cite{HFE} (for an overview \cite{OV,SAL}). Some new ideas for building new cryptographic schemes based on semigroup actions can also be found in \cite{ROS}, while in the context of knapsack quantum cryptographic schemes we refer for instance to \cite{QC}. {What D. Naccache and J. Stern built in \cite{NSK} was a proposal for an asymmetric protocol (NSK) following the earlier ideas of Morii and Kasahara in \cite{MK1}, further developed by Kasahara et al. in \cite{MK2,MK3}. The NSK protocol consists of a shuffling modulo $p$ of an easy problem over the integers, i.e. the factorization of a composite integer where the prime factors are chosen among a fixed set of small size.}
Given $p$ a prime and $\vZ/p\vZ$ the finite field of remainder classes, the NSK protocol is based on the unique factorization property of $\vZ$, which guarantees the uniqueness of the encryption.

This approach can be generalized to the case of multiplicative monoids (Section \ref{sec1}), and the NSK protocol is just a particular instance for the monoid $(\vZ,\cdot)$ of the general framework (subsection \ref{particularcase}).
Using this new general setting we are able to construct an analogous of the NSK protocol relying on the unique factorization properties of $\vF_q[x]$, instead of $\vZ$, where $\vF_q$ is the finite field of order $q$ (Section \ref{poly}).
The security of our particular proposal will rely on the arithmetic structure of the finite field $\vF_q[x]/(h(x))$ for some $h(x)\in \vF_q[x]$, irreducible of suitable degree (instead of the finite field of remainder classes $\vZ/p\vZ$). One of the main advantages of this kind of setting is that the security is based on an exponentiation over a finite field in such a way that it will be unfeasible for an attacker even to set up a discrete logarithm problem (DLP). Indeed, as we will show in the following, since the optimal version of the NSK protocol requires that the chosen prime be next to $\prod_i p_i$, the factorization of $p-j$ for some small $j$ could allow for a reduction to a DLP.
In our case, instead, we choose a set of irreducible polynomials and fix the degree of the reducing polynomial. By doing so there is no information leakage. 
Our new structural conditions will be related only to the degree of the carrier polynomials used for the encryption, avoiding any kind of DLP reduction. 

In subsection \ref{security} some issues concerning the security of the protocol will be addressed, in particular to avoid \emph{subgroup attacks}, that could possibly lead to information.

This new setting will lead to some advantages in terms of computational costs of encryption and decryption. In fact, arithmetics over finite fields $\vF_{q^m}$ is considered to be preferrable than arithmetics over $\vZ_{p}$ when $p\simeq q^m$ and $q\ll p$ in terms of computations. We will analyse the key features of our protocol, such as the number of parameters involved for the setting up of the public key, and this will allow us to show a greater deal of flexibility, in comparison with the NSK protocol.

{In subsection \ref{compare} we will analise the asymptotics of the information rate of our protocol, showing that it is equal to that of \cite{NSK}. An exact formula for the information rate will also be provided.}

As a subproduct, we present in Section \ref{CRT} a variation of the polynomial protocol where the irreducibility of $h(x)$ is dropped. The encryption is performed over a suitable direct sum of fields, and a decryption is available thanks to the Chinese Remainder Theorem.

\section{The new class}\label{sec1}
In this section we will present a generalized version of the protocol presented in \cite{NSK}. 

Let $S$ be a monoid and $\sim$ a finite index congruence on $S$. We will denote the class of an element $s\in S$ with respect to $\sim$ as $[s]$.

\begin{definition}
A morphism $\psi$ will be said to be $\sim$proper, if 
\begin{itemize}
\item $\psi\colon S\longrightarrow S$ is injective
\item $\psi$ is compatible with $\sim$  (i.e. $\psi(x)\sim\psi(y)$ iff $x\sim y$)
\item the induced application $\widetilde{\psi}\colon {S}/{\sim}\longrightarrow {S}/{\sim}$ is invertible.
\end{itemize}
\end{definition}

\begin{definition}\label{defcryptable}
Given $L\in\mathbb{N}$ we will say that $S$ is $L$-\emph{cryptable} under $\sim$ if there exists a $\sim$proper morphism $\psi$ and elements $s_1,\dots,s_L\in S$ such that
\[\alpha_\sim^\psi: \vZ_2^L \longrightarrow S/\sim\]
\[m=(m_1,\dots,m_L)\mapsto \left[\prod^L_{i=1}\psi(s_i)^{m_i}\right]\]
is an injective application.
\end{definition}

The following proposition will be useful later on
\begin{prop}\label{injalpha} Given a monoid $S$ that is $L$-cryptable under $\sim$, the following maps are also injective:
\[\alpha^\psi: \vZ_2^L \longrightarrow S\]
\[(m_1,\dots,m_L)\mapsto \prod^L_{i=1}\psi(s_i)^{m_i}\]
\[\alpha_\sim: \vZ_2^L \longrightarrow S/\sim\]
\[(m_1,\dots,m_L)\mapsto \left[\prod^L_{i=1}s_i^{m_i}\right]\]
\[\alpha: \vZ_2^L \longrightarrow S\]
\[(m_1,\dots,m_L)\mapsto \prod^L_{i=1}s_i^{m_i}.\]
\end{prop}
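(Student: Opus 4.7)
The plan is to deduce all three injectivity claims from the given injectivity of $\alpha^\psi_\sim$, together with the defining properties of a $\sim$proper morphism. The key observation is that the four maps fit into a commutative diagram: letting $\pi\colon S\to S/\sim$ denote the quotient projection, one has
\[
\alpha^\psi_\sim = \pi\circ\alpha^\psi, \qquad \alpha_\sim = \pi\circ\alpha,
\]
and, since $\psi$ is a monoid morphism, $\psi(\prod_i s_i^{m_i}) = \prod_i \psi(s_i)^{m_i}$, which after passing to the quotient (using compatibility of $\psi$ with $\sim$) yields
\[
\alpha^\psi_\sim = \widetilde{\psi}\circ\alpha_\sim.
\]

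First I would derive injectivity of $\alpha^\psi$: if $\alpha^\psi(m)=\alpha^\psi(m')$, then applying $\pi$ gives $\alpha^\psi_\sim(m)=\alpha^\psi_\sim(m')$, so $m=m'$ by the cryptability hypothesis. Next, for $\alpha_\sim$, I would use the factorization $\alpha^\psi_\sim=\widetilde{\psi}\circ\alpha_\sim$: since the left-hand side is injective and $\widetilde{\psi}$ is a (well-defined) function, the right factor $\alpha_\sim$ must also be injective, by the elementary fact that if $g\circ f$ is injective then $f$ is injective. (One does not even need full invertibility of $\widetilde{\psi}$ here; mere well-definedness suffices, and that is exactly what compatibility of $\psi$ with $\sim$ gives.) Finally, the injectivity of $\alpha$ follows by the same argument used for $\alpha^\psi$: if $\alpha(m)=\alpha(m')$, then $\alpha_\sim(m)=\alpha_\sim(m')$, and we apply the previous step.

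There is essentially no obstacle to overcome: the whole content of the proposition is a diagram chase, and the three nontrivial properties in the definition of $\sim$proper morphism (morphism, compatibility with $\sim$, induced map invertible) are more than enough. In fact, the proof shows that injectivity of $\widetilde{\psi}$ alone, rather than full invertibility, is used; the bijectivity of $\widetilde{\psi}$ plays its true role elsewhere, in the decryption step of the protocol.
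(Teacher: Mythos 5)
Your proof is correct and follows essentially the same route as the paper's: a diagram chase deducing injectivity of $\alpha^\psi$, $\alpha_\sim$, and $\alpha$ from that of $\alpha^\psi_\sim$ via the factorizations $\alpha^\psi_\sim=\pi\circ\alpha^\psi=\widetilde{\psi}\circ\alpha_\sim$ and $\alpha_\sim=\pi\circ\alpha$. Your side remark that mere well-definedness of $\widetilde{\psi}$ (rather than its invertibility) already suffices is a valid, if minor, sharpening of the paper's terser argument.
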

\begin{proof}
The proof follows by observing that, since $\psi$ is $\sim$proper morphism, then also $\alpha_\sim$ is injective. Also $\alpha_\sim^\psi$ injective implies that $\alpha^\psi$ is injective. 
Again, since $\psi$ is an injection, also $\alpha$ is injective.
\end{proof}

{As we have already pointed out}, this properties are necessary to keep the encryption meaningful. In the following we will see how it is possible to find non trivial examples of this construction.

Now, denote the image of  any map $f$ between sets by $\Im(f)$, and consider the following problems:
\begin{problem}\label{problem1}
Given $c\in \Im(\alpha^\psi_\sim)$ find $m$ such that $\alpha^\psi_\sim(m)=c$
\end{problem}

\begin{problem}\label{problem2}
Given $c'\in \Im(\alpha_\sim)$ find $m$ such that $\alpha_\sim(m)=c'$.
\end{problem}

Let now $S$, be an $L$-cryptable monoid under a congruence $\sim$. Whenever a given triple $(S,\sim,\psi)$ is such that {Problem \ref{problem1}} is difficult, {Problem \ref{problem2}} is easy we define a cryptosystem as follows. Let
 \[(S,\sim,L,\widetilde{\psi}([s_1]),\dots,\widetilde\psi([s_L]))\] 
be the public key and
 \[(\widetilde{\psi}^{-1}, s_1,\dots,s_L)\] 
be the secret key, the main operations are given by
\begin{itemize}
\item{\emph{Encryption}: $E(m):=\alpha^\psi_\sim(m) = \prod^L_{i=1}\widetilde{\psi}([s_i])^{m_i}=: c$}
\item{\emph{Decryption}: $D(c)$ is given by solving Problem \ref{problem2} for $c'=\widetilde{\psi}^{-1}(c)$.}
\end{itemize}
\begin{remark}
The reader should observe that in the definition of the protocol we did not use the injectivity of $\psi$ nor the fact that $S/\sim$ is a quotient of a monoid $S$. This is nevertheless the case in all the examples of this protocol we could find, where Problem \ref{problem2} is easy since a \emph{suitable} lift to $S$ is given. Indeed, in practical situations the problem will be solved computing $(\alpha^{-1}\circ\Gamma)(c')$ where $\Gamma$ is a lift $S/\sim\longrightarrow S$ such that the following diagram 

\begin{align}\label{liftdiagramma}
\xymatrix{
 \Z_2^L  \ar[r]^\alpha \ar[rd]_{\alpha_\sim}  & \Im(\alpha)  \\
  & \Im (\alpha_\sim) \ar[u]_{\widehat{\Gamma}} }
\end{align}
commutes when $\widehat{\Gamma}:=\Gamma |_{\Im(\alpha_\sim)}$

\end{remark}
\begin{remark}\label{definforate}
Notice that the information rate is given by $L/b$ where $b$ is the number of bits that are needed to represent an element of $S/\sim$
\end{remark}

In what follows we will show how the NSK protocol fits in this rather general framework, as well as brand new protocols involving polynomials over finite fields.

\subsection{NSK as a particular instance}\label{particularcase}
In this section we will show how the Naccache-Stern (NSK) protocol fits in our general framework, in the case $S=(\mathbb{Z},\cdot)$.

Consider the prime ideal $P=\left<p\right>$ generated by a prime number $p\in\mathbb{Z}$. Let us denote by $\sim$ the congruence induced by the ideal $P$. Such a congruence is obviously of finite index. Let $v$ be a positive integer with $u=v^{-1}\mod{p-1}$, and let
\[
\begin{aligned} \psi\colon\mathbb{Z}&\longrightarrow \mathbb{Z}\\ a&\longmapsto a^v.\end{aligned}
\]
It can be easily checked that $\psi$ is a $\sim$proper morphism of $\vZ$.

Now choose $L$ distinct prime numbers $p_i$ such that $\prod_{i=1}^L p_i<p$. 

\begin{prop}The map
\begin{equation}\begin{aligned}
\alpha_\sim^\psi\colon \vZ_2^L&\longrightarrow\vZ/p\vZ\\
(m_1,\dots,m_L)&\longmapsto\left[\prod\limits_{i=1}^Lp_i^{m_iv}\right]
\end{aligned}\end{equation}
is an injection and $(\vZ,\cdot)$ is therefore $L$-cryptable under the relation induced by the ideal generated by $p$.
\end{prop}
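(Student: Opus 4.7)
The plan is to factor $\alpha_\sim^\psi$ through $\widetilde{\psi}$ and reduce to an injectivity statement about the ``raw'' product map $\alpha_\sim(m)=[\prod p_i^{m_i}]$, which is handled by unique factorization in $\vZ$ together with the size hypothesis $\prod_i p_i < p$.

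First I would verify that $\psi\colon a \mapsto a^v$ is genuinely $\sim$proper. Injectivity on positive integers is clear; compatibility with $\sim$ follows because $\psi$ is a monoid homomorphism and reduction mod $p$ commutes with taking powers. For invertibility of $\widetilde{\psi}$, the hypothesis $uv\equiv 1\pmod{p-1}$ together with Fermat's little theorem gives $a^{uv}\equiv a\pmod{p}$ for every $a$, so the map $[a]\mapsto[a^u]$ is a two-sided inverse on $\vZ/p\vZ$. This legitimately lets me apply the framework of Section \ref{sec1}.

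Second, I would use the identity
\begin{equation*}
\alpha_\sim^\psi(m) \;=\; \Bigl[\prod_{i=1}^L p_i^{m_i v}\Bigr] \;=\; \widetilde{\psi}\!\left(\Bigl[\prod_{i=1}^L p_i^{m_i}\Bigr]\right) \;=\; \widetilde{\psi}\bigl(\alpha_\sim(m)\bigr).
\end{equation*}
Since $\widetilde{\psi}$ is bijective by the previous step, $\alpha_\sim^\psi$ is injective if and only if $\alpha_\sim$ is. So it suffices to prove injectivity of $\alpha_\sim$.

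Third, for $\alpha_\sim$: since $m_i\in\{0,1\}$, each product $\prod_i p_i^{m_i}$ is a divisor of $\prod_i p_i$ and therefore a positive integer strictly less than $p$. Hence the class $\bigl[\prod_i p_i^{m_i}\bigr]\in\vZ/p\vZ$ has a unique representative in $\{1,\dots,p-1\}$ equal to the integer product itself. If $\alpha_\sim(m)=\alpha_\sim(m')$, the two integer products therefore coincide in $\vZ$, and unique factorization together with the distinctness of the primes $p_i$ forces $m_i=m_i'$ for every $i$. Combined with the previous step, this yields injectivity of $\alpha_\sim^\psi$, and the $L$-cryptability of $(\vZ,\cdot)$ under the congruence induced by $\langle p\rangle$ is then just the statement of Definition \ref{defcryptable}. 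The only place where any care is needed is step three, where the bound $\prod_i p_i < p$ is essential to convert an equality modulo $p$ into an equality in $\vZ$; without it, unique factorization would not be directly applicable.
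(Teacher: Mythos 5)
Your proof is correct and follows essentially the same route as the paper's: undo the exponent $v$ by applying $\widetilde{\psi}^{-1}$ (the paper does this by raising both sides to the $u$-th power), use the bound $\prod_i p_i<p$ to lift the congruence to an equality in $\vZ$, and conclude by unique factorization. Your explicit verification that $\psi$ is $\sim$proper and the factorization $\alpha_\sim^\psi=\widetilde{\psi}\circ\alpha_\sim$ are just a slightly more structured phrasing of the same argument (the paper leaves the $\sim$properness check to the reader), so there is nothing substantive to add.
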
 
\begin{proof}
Assume that there exist two $L$-tuples $(m_1,\dots,m_L),(n_1,\dots,n_L)$ such that $\alpha_\sim^\psi(m_1,\dots,m_L)=\alpha_\sim^\psi(n_1,\dots,n_L)$, then:
\[
\left[\prod\limits_{i=1}^Lp_i^{m_iv}\right]=\left[\prod\limits_{i=1}^Lp_i^{n_iv}\right]\Rightarrow \left[\prod\limits_{i=1}^Lp_i^{m_iv}\right]^u=
\left[\prod\limits_{i=1}^Lp_i^{n_iv}\right]^u\Leftrightarrow
\left[\prod\limits_{i=1}^Lp_i^{m_i}\right]=\left[\prod\limits_{i=1}^Lp_i^{n_i}\right]
\]
in $\vZ/p\vZ$. Since $\prod_{i=1}^Lp_i^{m_i}$ and $\prod_{i=1}^Lp_i^{n_i}$ are smaller than $p$ we also have 
\begin{equation}\label{canlift1}
\prod\limits_{i=1}^Lp_i^{m_i}=\prod\limits_{i=1}^Lp_i^{n_i}
\end{equation} 
in the unique factorization domain $\vZ$, which implies $m_i=n_i\ \forall i$.
\end{proof}

\begin{remark}
Notice that we are able to express equation \eqref{canlift1} because we can always consider the canonical representative $x\in\{0,\dots,p-1\}$ in the remainder class modulo $p$. This representative is also the only representative in $\Im(\alpha)$ by construction, and therefore we have a canonical lift satisfying \eqref{liftdiagramma}.
\end{remark}

\begin{remark}\label{NSKleak}
The reader should observe that when $p=t+\prod_i p_i $ for $t$ small, than the information rate is maximal. Unfortunately in this case factoring $p-t$ is easy because $p-t$ is $p_L$-smooth and $p_L\ll p $, and this gives informations about the bare carriers $p_i$'s. Indeed in this case breaking the NSK protocol is not harder than solving the DLP for the $p_i$'s. Nevertheless the protocol remains interesting for additional features like \cite[Section 3]{NSK}.
\end{remark}

\section{A polynomial version}\label{poly}
In this section we  give a version of the protocol that works over $\vF_{q^d}$ instead of $\vZ/p\vZ$ in such a way that $q^d$ will be
of the same order of magnitude than the size $p$ of the field $\vZ/p\vZ$ in the NSK but $q\ll p$.
In this case the specific difficult problem we want to rely on is the following
\begin{problem}
Let $\vF$ be a finite field and $L\in\vN$.Given $y_1,\dots, y_L\in \vF$,
\[\alpha: \vZ_2^L \longrightarrow \vF\]
\[\alpha(m)=\prod_i y_i^{m_i}\]
 and $c\in \Im(\alpha)$, find $m$ such that $\alpha(m)=c$.
\end{problem}

Let now $k=\vF_q$ and  $k[x]$ the polynomial ring in one variable over $k$.
Let $h(x)$ be an irreducible element in $k[x]$ of degree $d$.
Set $\sim$ to be the congruence associated to the ideal $H=\left<h(x)\right>$ generated by the irreducible polynomial $h(x)$. An efficient algorithm to find irreducible polynomials of fixed degree is given, for instance in \cite{POLLEG}.
Set \[S=(k[x],\cdot)\] and \[S':=S/\sim\;=\left((k[x]/H)^*,\cdot\right)\]
where $(k[x]/H)^*=(k[x]/H) \setminus \{0\}$.
Fix $v,u\in \vN$ such that $\gcd{(v, |S'|) }=\gcd{(v, q^d-1) }=1$ and $uv\equiv 1 \mod |S'| $.
Set 
\[\begin{aligned}
\widetilde{\psi}\colon S'&\longrightarrow S'\\
[s]&\longmapsto [s^v].
\end{aligned}\]

\begin{remark}$\,$
\begin{itemize}
\item{$\widetilde{\psi}^{-1}\colon [z]\longmapsto [z]^u$}
\item{$k[x]/H\cong \vF_{q^d}$ is again a finite field}
\end{itemize}
\end{remark}
Let now $L\in \vN$ such that there exist $L$ distinct irreducible monic polynomials
{$p_1,\dots, p_L\in\vF_q[x]$} with the property
\begin{equation}\label{degreeproperty}
\sum_{i=1}^L \deg{p_i}<d.
\end{equation}
Notice that in the present description of the protocol there are several different strategies to choose the polynomials; we will analyse the properties of some interesting choices in the following sections.

Again, we have the encryption map:
{\begin{prop}
$(k[x],\cdot)$ is an $L$ cryptable monoid with the map
\begin{equation}\begin{aligned}
\alpha_\sim^\psi: \vZ_2^L &\longrightarrow S'\\
m=(m_1,\dots,m_L)&\longmapsto \left[\prod^L_{i=1}p_i^{vm_i}\right].
\end{aligned}\end{equation}
\end{prop}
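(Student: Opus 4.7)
The plan is to follow the same structural pattern as the proof in the integer case (subsection \ref{particularcase}), with the degree condition \eqref{degreeproperty} playing the role that the size bound $\prod_i p_i < p$ plays there. Concretely, I would proceed in three stages.

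First, I would verify quickly that $\widetilde{\psi}\colon[s]\mapsto[s^v]$ is a $\sim$proper morphism on $S'$: it is well-defined and multiplicative, and since $\gcd(v,|S'|)=1$ with $uv\equiv 1\pmod{|S'|}$, raising to the power $u$ provides an inverse at the level of $S/\sim$. Injectivity of $\widetilde{\psi}$ on $S'$ is then automatic, and lifting to an injective morphism on $S$ using unique factorization in $k[x]$ is routine. This installs the framework of Definition \ref{defcryptable}.

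Second, and this is the core of the statement, I would prove injectivity of $\alpha_\sim^\psi$. Suppose two binary tuples $(m_1,\dots,m_L)$ and $(n_1,\dots,n_L)$ yield the same image. Applying $\widetilde{\psi}^{-1}$, i.e.\ raising to the $u$-th power in $S'$, one obtains
\[
\left[\prod_{i=1}^L p_i^{m_i}\right]=\left[\prod_{i=1}^L p_i^{n_i}\right]\quad\text{in }k[x]/H.
\]
Now the degree hypothesis \eqref{degreeproperty} is used as the analogue of the size hypothesis in the integer proof: since each $m_i,n_i\in\{0,1\}$ and $\sum_i\deg p_i<d=\deg h$, both products have degree strictly less than $d$, so they are equal to their canonical representatives modulo $h(x)$. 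Hence the congruence lifts to an honest equality
\[
\prod_{i=1}^L p_i^{m_i}=\prod_{i=1}^L p_i^{n_i}\quad\text{in }k[x].
\]

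Third, I would invoke unique factorization in $k[x]$: since the $p_i$ are pairwise distinct monic irreducibles, the exponents must agree, so $m_i=n_i$ for every $i$. This delivers injectivity of $\alpha_\sim^\psi$ and hence $L$-cryptability of $(k[x],\cdot)$ under $\sim$ via $\widetilde{\psi}$. The only subtle step is really the lifting one; everything else is formally identical to the integer version, with the bound on the total degree replacing the bound on the product of primes, and with irreducibility plus monicity of the $p_i$ ensuring the factorization argument goes through without sign or unit ambiguities in $k[x]$.
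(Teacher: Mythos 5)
Your proof is correct and follows essentially the same route as the paper: raise to the $u$-th power to undo $\widetilde{\psi}$, use the degree bound \eqref{degreeproperty} to lift the congruence modulo $h(x)$ to an honest equality in $k[x]$, and conclude by unique factorization. The preliminary verification that $\widetilde{\psi}$ is $\sim$proper is set up in the surrounding text rather than in the paper's proof itself, but including it does no harm.
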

\begin{proof} Definition \ref{defcryptable} requires that the map $\alpha_\sim^\psi$ be an injection. Assume
\[\alpha_\sim^\psi(m_1,\dots,m_L)=\alpha_\sim^\psi(n_1,\dots,n_L)\]
\[ \left[\prod^L_{i=1}p_i^{vm_i}\right]= \left[\prod^L_{i=1}p_i^{vn_i}\right].\]
It follows
\[ \left[\prod^L_{i=1}p_i^{vm_i}\right]^u= \left[\prod^L_{i=1}p_i^{vn_i}\right]^u\]
\[ \left[\prod^L_{i=1}p_i^{m_i}\right]= \left[\prod^L_{i=1}p_i^{n_i}\right]\]
where, in the last equation, we can assume no reduction has happened, since property
(\ref{degreeproperty}) holds. Indeed
\begin{equation}\label{canlift2} \prod^L_{i=1}p_i^{m_i}= \prod^L_{i=1}p_i^{n_i}.\end{equation}
Recalling that $k[x]$ is a unique factorization domain we have $m_i=n_i\  \forall i$.
\end{proof}
}

So our cyphered text is given by $c(x)=\alpha_\sim^\psi(m_1,\dots,m_L)$. The explicit decryption for this protocol is simply given by the polynomial division of the decyphered code $(c(x))^u$, that is to say 
\begin{equation}\label{decryption}
m_i=1 \Longleftrightarrow (c(x))^u=0 \mod p_i(x) .
\end{equation}

\begin{remark}
We stress once again the fact that in obtaining equation \eqref{canlift2} we used the canonical lift 
\begin{align*}
\Gamma\colon &S/\sim\,\longrightarrow S \\
& [f(x)]\longmapsto g(x)
\end{align*}
where, for any representative $l(x)\in[f(x)]$, $g(x)$ is the remainder of the division of $l(x)$ by $h(x)$ in $k[x]$, and it is obviously independent of the choice of $l(x)$.
The decryption is effectively performed in $\Im(\alpha)$ and the solution to Problem \ref{problem2} is then given by $(\alpha^{-1}\circ \Gamma)(c(x)^u)$.
\end{remark}

The information rate $\mathcal{I}=L/\deg(h)\log_2(q)$ depends on the choice of the carrier polynomials. We will explain later how to maximise this value.

\begin{remark}
Once the $p_i$'s are fixed the top information rate for this protocol is obtained when we choose $h(x)$ such that
\begin{equation}\label{topinfocondition}
\sum\limits_{i=1}^L\deg{p_i}=\deg{h}-1.
\end{equation}
Indeed the information rate can always be maximised since it is always possible to choose $h(x)$ in $k[x]$ such that \eqref{topinfocondition} is satisfied (cf. Remark \ref{NSKleak}) without allowing for a straightforward reduction to a DLP. This case will be analysed in detail in \ref{efficiency}.
\end{remark}

\subsection{A simple example}
We now give an example in which $k[x]= \vF_2[x]$ and the space of messages has size $2^9$. In order to reach a message size of $9$ bits, we need exactly $9$ keys, that is to say monic irreducible polynomials in $\vF_2[x]$. From finite field theory, we know that there are exactly $q$ monic polinomials of degree 1, and 
\[
\frac{q^d-q}{d}
\]
irreducible monic polynomials of prime degree $d$. So, for $q=2$ we have two polynomials of degree 1, one polynomial of degree 2, two polynomials of degree 3 and six polynomials of degree 5. For the sake of simplicity, even if the example is non optimal as we will explain, let us choose all the irreducible monic polynomials of degree 1,2 and 5, summing up to exactly 9 keys, namely:
\begin{gather}
p_1=x\\
p_2=1 + x\\
p_3=1 + x + x^2\\
p_4=1 + x^2 + x^5\\
p_5=1 + x^3 + x^5\\
p_6=1 + x + x^2 + x^3 + x^5\\
p_7=1 + x + x^2 + x^4 + x^5\\
p_8=1 + x + x^3 + x^4 + x^5\\
p_9=1 + x^2 +x^3 + x^4 + x^5.
\end{gather}

Then, the public key $h(x)$ must be of degree 
\[d=\mathrm{deg}(h(x))=\sum_{i=1}^9\mathrm{deg}(p_i(x)) + 1 = 35\]
and irreducible. For instance we may take
\begin{equation}
h(x)=1+x^2+x^{35}
\end{equation}
and set our protocol onto $\vF_{2^{35}}\cong(\vF_2[x]/H)^*$, whose order is $2^{35}-1$ when $H=\left<h(x)\right>$.
We choose the secret key and the decryption exponent, accordingly, to be $v=3821$ and $u=25169564954$, so that $uv=1\ \mathrm{mod}(2^{35}-1)$.
Then we may publish the 9 carrier keys $p_i^v \mod {(h(x),2)}$:
\begin{align}
p_1^v=&1 + x^2 + x^4 + x^{10} + x^{12} + x^{18} + x^{22} \\ 
&+ x^{23} + x^{24} + x^{26} + x^{27} + x^{29} + x^{32}\notag\\
p_2^v=&x + x^3 + x^5 + x^6 + x^7 + x^{10} + x^{12}+ x^{13}\\
&  + x^{17} + x^{20} + x^{21} + x^{22} + x^{24} + x^{28} + x^{30} + x^{32}\notag\\
p_3^v=&x + x^4 + x^5 + x^7 + x^{13} + x^{20}+ x^{22} \\ 
& + x^{28} + x^{29} + x^{30} + x^{31} + x^{32} + x^{33} + x^{34}\notag\\
p_4^v=&1 + x^2 + x^3 + x^4 + x^{11} + x^{14} + x^{15} + x^{17} + x^{18}\\
& + x^{19} + x^{20} + x^{21} + x^{24} + x^{28} + x^{30} + x^{34}\notag\\
p_5^v=&1 + x + x^2 + x^3 + x^4 + x^7 + x^8 + x^9 + x^{10} + x^{11} + x^{15}\\
& + x^{18} + x^{20} + x^{21} + x^{22} + x^{24} + x^{26} + x^{29} + x^{32} + x^{33}\notag\\
p_6^v=&1 + x + x^2 + x^4 + x^7 + x^{12} + x^{13} + x^{15} + x^{16} +\\
& x^{18} + x^{21} + x^{22} + x^{23} + x^{24} + x^{30} + x^{34}\notag\\
p_7^v=&1 + x^4 + x^8 + x^9 + x^{10} + x^{15} + x^{19} + x^{28} + x^{30} + x^{32} + x^{33}\\
p_8^v=&x + x^3 + x^4 + x^5 + x^8 + x^{10} + x^{12} + x^{13} + x^{15} + x^{16} \\
&+ x^{17} + x^{25} + x^{26} + x^{27} + x^{28} + x^{30}\notag\\
p_9^v=&x + x^4 + x^6 + x^7 + x^{10} + x^{11} + x^{12} + x^{13} + x^{14} + x^{15} + x^{16}\\
& + x^{17} + x^{18} + x^{20} + x^{23} + x^{24} + x^{30} + x^{31} + x^{32} + x^{33}.\notag
\end{align}

Suppose we want to send the message $m=111000111\in{\vZ_2^9}$, we encode it into
\begin{align}
c=&\prod\limits_{i=1}^9p_i^{vm_i} \mod {(h(x),2)}=\notag\\
=&\ x^2 + x^3 + x^6 + x^{10} + x^{15} + x^{16} + x^{17} + x^{18}\notag\\
+&\ x^{20} + x^{21} + x^{23} + x^{26} + x^{27} + x^{30} + x^{31} + x^{33} + x^{34}.
\end{align}

Once the message has been received, it is sufficient to take the {$u$-th} power, and the result is as follows:
\begin{align}
c^u=&\prod\limits_{i=1}^9p_i^{vum_i}  \mod {(h(x),2)}=\prod\limits_{i=1}^9p_i^{m_i} =\notag\\
=&\ x + x^3 + x^4 + x^6 + x^{11} + x^{12} + x^{14} + x^{15} + x^{16} + x^{19}
\end{align}
whose factorization yields:
\begin{equation}\begin{aligned}
\mathsf{Factor}_{2}(c^u)=&x (1 + x) (1 + x + x^2) (1 + x + x^2 + x^3 + x^5)\\
& (1 + x + x^3 + x^4 + x^5) (1 + x^2 + x^3 + x^4 + x^5).
\end{aligned}\end{equation}
We used the factorization algorithm in this simple example because we are working with small messages. The decryption algorithm presented in \eqref{decryption} is to be considered preferential.

The information rate associated to this encryption protocol is 
\begin{equation}
\mathcal{I}=\frac{L}{\deg(h)}=\frac{9}{35}\cong 25,7\%
\end{equation}
with the size of the space of messages being $2^9$.
\begin{remark}
A similar example is presented in \cite{NSK}, with $2^8$ messages. In the cited example the information rate is slightly higher than ours, yet comparable, but the space of messages is smaller.

If we wanted to match the size of space of messages it would be sufficient to remove one polynomial of degree 5, obtaining an information rate of $\mathcal{I}=8/30\sim 26,7\%$.

Remarkably enough, as in the NSK-protocol there is apparently no key leakage, our protocol preserves the security of the carrier keys. As a matter of fact, factoring the cyphertext $c$, one gets no information whatsoever on the cleartext, as it can be seen in the given example:
\begin{multline}
\mathsf{Factor}_2(c)=x^2 \left(x^4+x^3+1\right) (x^{28}+x^{25}+x^{24}+x^{23}+x^{22}+x^{21}+x^{20}+x^{18}+\\
x^{17}+x^{15}+x^{14}+x^{12}+x^{11}+x^{10}+x^8+x^6+x^5+x^4+x^3+x+1)
\end{multline} 
\end{remark}

{
\begin{remark}
More generally, let $g(x)$ be the public modulus and 
\[p_1^{vm_1} p_2^{vm_2}\cdots p_L^{vm_L}\equiv c(x) \mod g(x)\] a cyphetext. Observe that over $\vF_{q}[x]$ we have   \[P(x)=p_1^{vm_1} p_2^{vm_2}\dots p_L^{vm_L}=t(x)g(x)+c(x)\] for some $t(x)\in\vF_{q}[x]$. Now notice that infering on the factorization of $P(x)$ from the data of $c(x)$ in terms of the factor basis \[\{p_1^{vm_1},\dots ,p_L^{vm_L}\}\] is the difficult problem on which the protocol relies, since the factorization of polynomials behaves badly with respect to reductions modulo irreducible polynomials. 
As a matter of fact, we base the security of our protocol on the randomness of the factorization of elements in the image of the map
\[\Gamma_{g,c}: \vF_q[x] \longrightarrow \vF_{q}[x]\]
\[\Gamma_{g,c}(t(x))=t(x)g(x)+c(x)\]
In general, the usual security one expects using prime numbers as carriers (NSK) can be extended to monic irreducible polynomials.

\end{remark}

 }

As we already { pointed out}, we are using here a non-optimal setting for our example, in that we skipped the polynomials of degree 3 and 4, and used all those of degree 5 instead. If we decided to optimize the information rate, we could take the two polynomials of degree 1, the single polynomial of degree 2, two of degree 3 and three of degree 4, for an overall encoding power of $2^8$ messages. Notice that the space of messages is again equal to the example given in \cite{NSK}. 

Choosing polynomials of degree 3 and 4 instead of 5 allows us to reduce the degree of $h(x)$, that is to say the number of bits that are needed to encrypt a message. So, if we compute the information rate in this case we obtain a much better result:
\begin{equation}
\mathcal{I}=\frac{\log_2{m}}{\log_2{c}}=\frac{8}{23}\cong 34,78\%
\end{equation}
which is slightly higher than the information rate presented in \cite{NSK} for the same message size.

The procedure works exactly the same when we change the ground field from $p=2$ to $p=3$. This time we may choose three polynomials of degree 1, three of degree 2 and two of degree 3, all monic and irreducible, allowing us to reduce the overall degree of $h(x)$ to $\mathrm{deg}(h(x))= 16$. In this case, for the same message size, we get an information rate of
\begin{equation}
\mathcal{I}=\frac{8}{16\log_2 3}\cong 31,55\%
\end{equation}
which is not better than the information rate in \cite{NSK}, for a space of messages of the same size, yet comparable.

\subsection{Flexibility of the protocol}
We have already pointed out in the previous sections that the important condition \eqref{degreeproperty} can be fulfilled in several different ways according to the strategy we use in choosing the carrier polynomials $p_i$'s.
In what follows we will present a strategy that optimises the information rate and one that, to our analysys, improves security.

{We will give a detailed analysis of the asymptotics of the information rate of our protocol and of NSK, showing that they have the same behaviour. In what follows our finite field $k$ will be $\vF_q$ for some prime power $q$.
}

\subsubsection{Optimization of the information rate}\label{efficiency}
The optimization of the information rate is ensured by the following:
\begin{prop}
There exists a strategy that maximises the information rate $\mathcal{I}$ for any choice of $q$ and $L$. Moreover, in this strategy the information rate is determined by the closed formula
\begin{equation}\label{inforate}
\mathcal{I}(q,N)=\frac{\sum\limits_{n=1}^N\frac{1}{n}\sum\limits_{k\mid n}\mu\left(\frac{n}{k}\right)q^{{k}}}{\left( \sum\limits_{n=1}^N\sum\limits_{k\mid n}\mu\left(\frac{n}{k}\right)q^{{k}}+1\right)\log_2q}
\end{equation}
where $\mu(x)$ is the M\"obius function.
\end{prop}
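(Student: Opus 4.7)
The plan is to separate the optimisation from the combinatorial counting. First I would observe that, by the remark immediately preceding the statement, for any admissible family of distinct monic irreducibles $p_1,\dots,p_L\in\vF_q[x]$ one is free to choose $h$ of degree $\sum_{i=1}^L \deg p_i +1$; hence maximising $\mathcal{I}=L/(\deg h\cdot\log_2 q)$ is equivalent to maximising
\[
\frac{L}{\sum_{i=1}^L \deg p_i +1}
\]
over such families of carriers.

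Next I would argue by a short exchange argument that for every fixed $L$ the denominator $\sum_i\deg p_i$ is minimised by a \emph{greedy prefix}: exhaust all monic irreducibles of degree $1$, then all of degree $2$, and so on up to some cutoff. Indeed, replacing an included polynomial of degree $n$ by an excluded one of strictly smaller degree $m<n$ decreases the sum by $n-m>0$ without altering $L$. A further monotonicity check shows that the maxima of the ratio across all admissible $L$ are attained precisely on \emph{complete} prefixes, i.e.\ on families using every monic irreducible of degree $\le N$ for some $N\ge 1$. The parameter $N$ in the statement of the proposition plays exactly this role.

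To conclude I would invoke Gauss's classical formula for the number $N_q(n)$ of monic irreducible polynomials of degree $n$ in $\vF_q[x]$,
\[
N_q(n)=\frac{1}{n}\sum_{k\mid n}\mu(n/k)\,q^{k},
\]
giving $L=\sum_{n=1}^N N_q(n)$ and, since $n\,N_q(n)=\sum_{k\mid n}\mu(n/k)q^{k}$,
\[
\sum_{i=1}^L \deg p_i \;=\; \sum_{n=1}^N\sum_{k\mid n}\mu(n/k)q^{k}.
\]
Substituting these two expressions into $\mathcal{I}=L/\bigl((\sum_i\deg p_i+1)\log_2 q\bigr)$ produces the announced closed formula \eqref{inforate}.

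The main obstacle is the optimisation step: the ratio $L/(\mathrm{denominator}+1)$ is not globally monotone in $L$, so some care is required to justify both that greedy prefixes minimise the denominator for fixed $L$ and that the \emph{complete} prefixes actually exhaust the global maxima across all $L$. Once this combinatorial point is pinned down, the remainder reduces to a direct substitution via Möbius-inversion counting, with no further analytic input needed.
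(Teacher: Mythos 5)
Your proposal follows essentially the same route as the paper --- greedily taking all monic irreducibles of lowest degrees and then substituting Gauss's M\"obius-inversion count into $\mathcal{I}=L/(\deg h\cdot\log_2 q)$ --- with your exchange argument making precise the optimality claim that the paper merely asserts (``we are lowering as much as possible the degree of $h(x)$''). The ``main obstacle'' you flag is not actually needed: the proposition fixes $L$, so only the fixed-$L$ greedy minimisation of $\sum_i\deg p_i$ is required, and the closed formula is understood to apply to the values of $L$ that are complete prefix sums, parametrised by the cutoff degree $N$.
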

\begin{proof}
We defined the information rate to be $\mathcal{I}=L/(\deg{h}\log_2q)$ and we know that the degree of $h$ depends on the particular choice of carrier polynomials. The strategy we will consider is simply given by choosing \emph{all} irreducible polynomials of all degrees  up to a given degree $N$. Denote the number of degree-$n$ irreducible polynomials in $\vF_q[x]$ by $D^q_n$, we have the formula
\begin{equation*}
D_n^q=\frac{1}{n}\sum_{k\mid n}\mu\left(\frac{n}{k}\right)q^k
\end{equation*}
where $\mu(x)$ is the M\"obius function. The overall number of chosen polynomials, that is the number of bits that the plain text is composed by, as well as the sum of the degrees of the $p_i$'s are given by a closed formula, namely:
\begin{align}
L&=\sum\limits_{n=1}^N D_n^q= \sum\limits_{n=1}^N\sum\limits_{k\mid n}\mu\left(\frac{n}{k}\right)\frac{q^{{k}}}{n}\\\label{degh}
\deg(h(x))&=\sum_{n=1}^N nD_n^q+1=\sum\limits_{n=1}^N\sum\limits_{k\mid n}\mu\left(\frac{n}{k}\right)q^{{k}}+1
\end{align}
for some maximal degree $N$ (which is dependent on $L$ if we consider $L$ to be the fundamental parameter). Then, the information rate $\mathcal{I}$ as a function of the prime power $q$ and (implicitly) the parameter $L$ has the desired closed expression.

It is easy to gather that such a choice of the polynomials guarantees maximal information rate, in that we are lowering as much as possible the degree of $h(x)$ and as a result the number of bits of the encrypted message.
\end{proof}

\begin{remark}
The obvious disadvantage of the strategy above is that one can always assume that the bare carrier polynomials are known, for we take all of them progressively up to degree N. As a matter of fact, the strategy above gives us a clear upper bound for the information rate, for all different combinations of $L$ and $q$. Notice, however, by comparison with the tables of \cite{NSK}, that this is the same strategy adopted by Naccache and Stern, where the chosen prime $p$ has the same size of $\mathsf{NextPrime}(\prod p_i)$.
\end{remark}

Within this strategy it is important to notice that all the variations proposed in \cite[Section 2.3]{NSK} are importable in the present context. For example, it is possible to express the message $m$ in a basis different from 2, and this would lead to some modification to the suitable degrees for our carriers. Moreover, it is possible to restrict the space of messages to constant-weight strings. This last choice increases the information rate since it allows to lower the degree of $h(x)$. In fact, if $w$ is the constant weight, the bound on the degree of $h$ is:
\[\deg{h}>wN\]
where N is the highest degree of the chosen carriers.

Apart from these extensions, the standard NSK protocol is summarized in the table presented in \cite[Section 2.2]{NSK}, where the information rate for 512, 1024 and 2048 bits-sized $p$'s is given. The strategy we have just outlined to reach the maximal information rate, allows us to obtain the exact values presented in \cite{NSK} matching the degree of our polynomial $h$ with the size of their prime $p$ and $L$ with the size $\mathcal{M}$ of the message. So we are able to obtain the same information rate.

The matching procedure works as follows: compute the degree of $h$ obtained by choosing all polynomials up to a given degree, say 9 to obtain $\deg{h}=977$. Then, top it to the next block, in this case 1024 bits, choosing \emph{some} polynomials of one degree higher, in this case 11. This leads to an increase in the number $L$ of carrier polynomials from 127 to 131, and the information rate is then given by the ratio $L/\deg{h}$.

\begin{table}
\centering
\begin{tabular}{c|c|c}
$L$ (bits) & $\deg{h}$ (bits) &  $\mathcal{I}$\\\hline
131 & 1024 & 12,8 \%\\
233 & 2048 & 11,4\% \\
418 & 4096 & 11,2\%
\end{tabular}\caption{{Information rate matching with \cite[Section 2.2]{NSK}}}\label{tab1}
\end{table}

{ In Table \ref{tab1} we show how to match the examples presented in \cite{NSK}, and the last row is obtained by extending their calculations to 4096 bits. If we go further and compute the relevant figures in the case of 8192 bits we find almost perfect agreement also in this case (cf. Table \ref{tab2}). It will be clear in what follows why this happens. }
\begin{table}
\centering
\begin{tabular}{c|c|c|c}
$L$ (bits)& $\mathcal{M}$ (bits) & Size of $p$ \& $\deg{h}$ (bits)  &  $\mathcal{I}$\\\hline
759& 758 & 8192  & 11,4\%
\end{tabular}\caption{{Extension to next block and matching of the information rate}}\label{tab2}
\end{table}

{
\subsubsection{Asymptotics comparison with previous works}\label{compare}
We will prove in this section that our protocol has the same asymptotic information rate of \cite{NSK}. A naive explanation of this fact is given by arguing that the number of primes below a certain number of bits has the same behaviour as the number of irreducible polynomials in $\vF_q[x]$ below a certain degree.

Let us fix the notation 
\[
a_N\sim \,b_N\ \ \Longleftrightarrow \lim_{N\rightarrow\infty} \frac{a_N}{b_N}=1.
\]
}
{
We will make use of the following
\begin{lemma}\label{lemlem}
\begin{equation}\label{lemma}
\sum\limits_{n=1}^N D_n^q \sim  \frac{q}{q-1} D^q_N
\end{equation}
\end{lemma}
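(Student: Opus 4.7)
The plan is to exploit the classical estimate that, in the Möbius expansion
\[
D_n^q=\frac{1}{n}\sum_{k\mid n}\mu(n/k)\,q^k,
\]
the $k=n$ summand dominates, all other summands involving $q^k$ with $k\le n/2$. More precisely, I would first isolate the leading term and bound the remainder by the crude estimate
\[
\Bigl|\,D_n^q-\frac{q^n}{n}\Bigr|\le \frac{1}{n}\sum_{k=1}^{\lfloor n/2\rfloor}q^k\le \frac{q^{\lfloor n/2\rfloor+1}}{n(q-1)},
\]
so that $D_n^q=q^n/n+O(q^{n/2})$ and, in particular, $D_N^q\sim q^N/N$.

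Next I would transfer the asymptotics from $D_n^q$ to $q^n/n$ inside the sum: summing the error estimate above from $n=1$ to $N$ gives
\[
\sum_{n=1}^N D_n^q=\sum_{n=1}^N\frac{q^n}{n}+O\!\bigl(q^{N/2}\bigr),
\]
which is negligible against the target order $q^N/N$. So it suffices to prove
\[
\sum_{n=1}^N\frac{q^n}{n}\,\sim\,\frac{q}{q-1}\cdot\frac{q^N}{N}.
\]

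For this last step I would factor out $q^N/N$ and reindex by $j=N-n$:
\[
\frac{N}{q^N}\sum_{n=1}^N\frac{q^n}{n}=\sum_{j=0}^{N-1}\frac{N}{N-j}\,q^{-j}.
\]
For each fixed $j$ the coefficient $N/(N-j)\to 1$ as $N\to\infty$, while for $j\le N/2$ one has $N/(N-j)\le 2$ and for $j>N/2$ the whole tail is bounded by $N\sum_{j>N/2}q^{-j}=O(Nq^{-N/2})\to 0$. A dominated-convergence argument for series (dominating sequence $2q^{-j}$) then yields
\[
\sum_{j=0}^{N-1}\frac{N}{N-j}\,q^{-j}\ \longrightarrow\ \sum_{j=0}^{\infty}q^{-j}=\frac{q}{q-1}.
\]

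The only slightly delicate point is this interchange of limit and summation, since the factors $N/(N-j)$ are not monotone in $N$; that is why I would split the sum at $j=N/2$ and treat the tail separately. Combining the three steps gives $\sum_{n=1}^N D_n^q\sim (q/(q-1))\,q^N/N\sim (q/(q-1))\,D_N^q$, which is \eqref{lemma}.
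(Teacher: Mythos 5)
Your proof is correct, but it takes a genuinely different route from the paper's at the key analytic step, namely $\sum_{n=1}^N q^n/n \sim \frac{q}{q-1}\,\frac{q^N}{N}$. The paper sets $S_N:=\sum_{n=1}^N\frac{N}{n}q^{n-N}$, observes the recursion $S_{N+1}=\frac{1}{q}\frac{N+1}{N}S_N+1$, passes to $\limsup$ and $\liminf$ to force the value $q/(q-1)$, and then must separately prove finiteness of the $\limsup$ by comparing the sum with $\int_2^{N+1}\frac{q^x}{x}\,dx$ and invoking L'H\^opital. You instead reindex by $j=N-n$ and run a Tannery/dominated-convergence argument with the split at $j=N/2$ (dominating sequence $2q^{-j}$ on the head, a crude $Nq^{-N/2}$ bound on the tail); this is more direct and dispenses with both the recursion and the integral comparison. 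You are also more careful on the preliminary reduction: the paper passes from $D_n^q\sim q^n/n$ to asymptotic equality of the partial sums without comment (which is valid here only because the terms are positive and the sums diverge), whereas your explicit bound $|D_n^q-q^n/n|=O(q^{n/2})$, obtained from the fact that every proper divisor of $n$ is at most $n/2$, makes that transfer immediate and quantitative. Both arguments are sound; yours is arguably the cleaner and more self-contained of the two.
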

\begin{proof}
First recall that \cite[Theorem 2.2]{Poly}
\[
D_n^q \sim \frac{q^n}{n}
\]
and therefore the sums behave asymptotically as
\[
\sum\limits_{n=1}^N D_n^q \sim \sum\limits_{n=1}^N \frac{q^n}{n}.
\]
Then we have \eqref{lemma} if and only if
\begin{equation}
\lim_{N\rightarrow \infty} \frac{\sum\limits_{n=1}^N \frac{q^n}{n}}{\frac{q^N}{N}} = \frac{q}{q-1}.
\end{equation}

Now, denote by $S_N\mathrm{:=}\sum\limits_{n=1}^N \frac{N}{n}q^{n-N}$ and observe that it might be expressed in terms of the recursive sequence
\begin{equation}\label{recursion}
S_{N+1} = \frac{1}{q}\frac{N+1}{N} S_N +1.
\end{equation}
for the initial value $S_1=1$. Consider $S_-=\liminf_{N\rightarrow\infty}S_N$ and $S_+=\limsup_{N\rightarrow\infty}S_N$. Passing to the $\limsup$ and $\liminf$ in \eqref{recursion} we get the same equation for $S_\pm$:
\[
S_\pm=\frac{S_\pm}{q} +1
\]
provided that they are both finite. Assuming that they are, we conclude that 
\begin{equation}
\lim_{N\rightarrow\infty}S_N=S_\pm=\frac{q}{q-1}
\end{equation}

This assumption is legitimate since $S_N\geq 0$ for all $N\in \mathbb{N}$, thus $S_-\geq 0$, and for $S_+$ we observe that 
\begin{itemize}
\item When $x\in\vR^+$ we have that $\frac{q^x}{x}$ is increasing for $x\geq \frac{1}{\log{q}}\geq 2$, since $q\geq 2$, and in particular this is true for $x\in\vN^*$;
\item $\limsup_{N\rightarrow \infty} \frac{N}{q^N}\sum\limits_{n=1}^N \frac{q^n}{n}=\limsup_{N\rightarrow \infty} \frac{N}{q^N}\sum\limits_{n=2}^N \frac{q^n}{n}$.
\end{itemize}
It follows that 

\[
\limsup_{N\rightarrow \infty}\frac{N}{q^N}\sum\limits_{n=1}^N \frac{q^n}{n} = \limsup_{N\rightarrow \infty} \frac{N}{q^N}\sum\limits_{n=2}^N \frac{q^n}{n} \leq \limsup_{N\rightarrow \infty}\frac{N}{q^N}\int\limits_{2}^{N+1}\frac{q^x}{x}dx
\]
where the last inequality comes from the fact that $\sum\limits_{n=2}^N \frac{q^n}{n}$ are the lower sums of $\int\limits_{2}^{N+1}\frac{q^x}{x}dx$, since $\frac{q^x}{x}$ is increasing for $x\geq 2$. Moreover
\[
\lim_{N\rightarrow \infty}\frac{\int\limits_{2}^{N+1}\frac{q^x}{x}dx}{\frac{q^N}{N}} = \lim_{t\rightarrow \infty}\frac{\int\limits_{2}^{t+1}\frac{q^x}{x}dx}{\frac{q^t}{t}}=\lim_{t\rightarrow \infty}\frac{\frac{q^{t+1}}{t+1}}{\frac{q^t}{t}(\log{q} - \frac{1}{t})}=\frac{q}{\log{q}}
\]
where the second equality follows from the De L'H\^{o}pital rule. This proves that
\[
0\leq \liminf_{N\rightarrow \infty} S_N \leq \limsup_{N\rightarrow \infty} S_N \leq \frac{q}{\log{q}}
\]
and yields the claim.
\end{proof}
}

{ We are now ready  to prove

\begin{prop}\label{MSKprop}
\begin{equation}\label{asympt}
\mathcal{I}(q,N)\sim \frac{1}{\log_2{q}}\frac{1}{N}
\end{equation}
\begin{proof}
Observe that $nD_n^q\sim q^n$ and therefore, from \eqref{inforate}
\[
\mathcal{I}(q,N)\sim\frac{\sum\limits_{n=1}^N \frac{q^n}{n}}{\log_2{q}\sum\limits_{n=1}^N q^n}
\]
Now, it is easy to gather that 
\begin{equation}\label{geoseries}
\sum\limits_{n=1}^N q^n \sim \frac{q}{q-1} q^N
\end{equation}
then, plugging the results of \eqref{geoseries} and of Lemma \ref{lemlem} into \eqref{inforate}, we obtain
\begin{equation}
\mathcal{I}(q,N)\sim \frac{1}{\log_2{q}}\frac{\frac{q}{q-1}\frac{q^N}{N}}{\frac{q}{q-1}q^N} =\frac{1}{\log_2{q}} \frac{1}{N}.
\end{equation} 

\end{proof}
\end{prop}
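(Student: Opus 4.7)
The plan is to rewrite the closed-form expression \eqref{inforate} in the transparent shape $\mathcal{I}(q,N)=L/(\deg(h)\log_2 q)$ and then extract the leading order asymptotics using Lemma \ref{lemlem} together with the elementary geometric series estimate. First I would observe that, with the M\"obius identity $D_n^q=\frac{1}{n}\sum_{k\mid n}\mu(n/k)q^k$, the numerator of \eqref{inforate} is precisely $\sum_{n=1}^N D_n^q$, while the denominator is $\left(\sum_{n=1}^N n D_n^q+1\right)\log_2 q$, in agreement with \eqref{degh}. This immediately reduces the proposition to controlling $\sum_{n=1}^N D_n^q$ and $\sum_{n=1}^N n D_n^q$ asymptotically in $N$.

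Next I would invoke $D_n^q\sim q^n/n$ from \cite[Theorem 2.2]{Poly}, used term-by-term to replace $nD_n^q$ by $q^n$ inside the denominator sum; since $\sum_{n=1}^N q^n\to\infty$, the additive $+1$ is negligible, and the subleading contributions to $D_n^q$ are geometrically dominated by the top term, so they do not affect the $\sim$-equivalence. This yields
$$\mathcal{I}(q,N)\sim \frac{\sum_{n=1}^N q^n/n}{(\log_2 q)\sum_{n=1}^N q^n}.$$

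Finally, Lemma \ref{lemlem} handles the numerator, giving $\sum_{n=1}^N q^n/n\sim \frac{q}{q-1}\frac{q^N}{N}$, while for the denominator the closed formula $\sum_{n=1}^N q^n=\frac{q(q^N-1)}{q-1}$ yields at once $\sum_{n=1}^N q^n\sim \frac{q}{q-1}q^N$. Substituting both estimates, the common factors $\frac{q}{q-1}$ and $q^N$ cancel and one is left with $\frac{1}{\log_2 q}\cdot\frac{1}{N}$, which is exactly \eqref{asympt}. The only genuinely non-trivial input is Lemma \ref{lemlem}, already established; the rest is bookkeeping of $\sim$-relations. The minor point to be careful about is the term-by-term replacement of $D_n^q$ by $q^n/n$ inside the sums, but since the error in Gauss's formula is exponentially smaller than the leading term, the sum of the errors is absorbed in the $\sim$ and presents no real obstacle.
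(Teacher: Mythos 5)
Your proposal is correct and follows essentially the same route as the paper: identify the numerator and denominator of \eqref{inforate} with $\sum_n D_n^q$ and $\bigl(\sum_n nD_n^q+1\bigr)\log_2 q$, replace $nD_n^q$ by $q^n$ via $D_n^q\sim q^n/n$, apply Lemma \ref{lemlem} to the numerator and the geometric-series estimate \eqref{geoseries} to the denominator, and cancel. Your extra remark justifying the term-by-term substitution (the errors being geometrically dominated) is a small point of rigor the paper leaves implicit, but it does not change the argument.
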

{
We would like to compare this result with the information rate of the NSK protocol. Notice that in order to make a consistent comparison we must understand the role of our parameter $N$ in the NSK.  

Once $q$ is fixed,  bounding the degree of the carrier polynomials by $N$ is the same as bounding
 the number of bits required to represent any of them by the quantity $M=\left\lfloor N \log_2(q)\right\rfloor$. 

The analogous bound for the NSK is then given by bounding the number of bits of the prime carriers by $M$. This is the same as bounding the prime carriers themselves by $2^M\simeq q^N$.
In the following proposition the comparison is made explicit.}
{

\begin{prop}\label{NSKprop}
Let $N$ be the bound on the degree of the carrier polynomials and $M=\left\lfloor N \log_2(q)\right\rfloor$ the analogous bound for the bits of the prime carriers in the NSK.
The information rate for the NSK protocol  is asymptotically given by
\begin{equation}
I_{NSK}\sim \frac{1}{\log_2{q}}\frac{1}{N}.
\end{equation}
\begin{proof}
It is known  \cite[Equation 2]{Erd} that for large $m\in\vN$
\[
\prod\limits_{p<m}p \sim e^{m}.
\]
Let us consider $m=2^M\simeq q^N$, then $\prod_{p<q^N}p \sim \exp{q^N}$. Now, the number of prime numbers up to $q^N$ asymptotically goes, by the prime number theorem, as
\[
\pi(q^N)\sim \frac{q^N}{N\ln{q}}.
\]
In our case this will be the number of carrier prime numbers up to $q^N$. On the other hand $\exp{q^N}$, which is the size of  the prime modulus of \cite{NSK}, has $\left\lfloor q^N\log_2{e}\right\rfloor$ digits, and therefore the information rate is computed as
\begin{equation}
I_{NSK}\sim \frac{\frac{q^N}{N\ln{q}}}{q^N\log_2{e}}=\frac{1}{\log_2{q}}\frac{1}{N}
\end{equation}
\end{proof}
\end{prop}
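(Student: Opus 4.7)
The plan is to mimic the structure of Proposition \ref{MSKprop}: write the information rate as (number of plaintext bits)/(number of ciphertext bits), asymptotically evaluate each factor separately using classical analytic number theory, and then simplify using $\ln q \cdot \log_2 e = \log_2 q$.

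First I would translate the degree bound $N$ into a bound on the NSK carriers. Since the carrier polynomials in $\vF_q[x]$ of degree at most $N$ correspond, bitwise, to integers with at most $M=\lfloor N\log_2 q\rfloor$ bits, the natural NSK analogue is to take all primes $p_i\leq 2^M\simeq q^N$. Under the ``optimal'' NSK strategy (cf.\ Remark \ref{NSKleak} and the discussion after Table \ref{tab2}) the prime modulus $p$ is chosen just above $\prod_{p_i<q^N} p_i$, so the number of bits needed to represent an element of $\vZ/p\vZ$ is asymptotically $\log_2\!\bigl(\prod_{p_i<q^N}p_i\bigr)$.

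Next I would plug in the two classical asymptotics. From the product estimate \cite[Equation 2]{Erd}, $\prod_{p<m} p \sim e^m$, applied with $m=q^N$, the denominator becomes
\begin{equation*}
\log_2\!\Bigl(\prod_{p<q^N}p\Bigr)\sim q^N\log_2 e.
\end{equation*}
From the prime number theorem the number of carrier primes is
\begin{equation*}
L_{\mathrm{NSK}}=\pi(q^N)\sim \frac{q^N}{N\ln q}.
\end{equation*}
Dividing gives
\begin{equation*}
I_{NSK}\sim \frac{q^N/(N\ln q)}{q^N\log_2 e}=\frac{1}{N\ln q\cdot \log_2 e}=\frac{1}{N\log_2 q},
\end{equation*}
where the last equality uses the identity $\ln q\cdot \log_2 e=\log_2 q$. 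This matches the asymptotic obtained in Proposition \ref{MSKprop} for our polynomial protocol.

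There is no serious obstacle here, since both asymptotic ingredients are standard. The only point that deserves care is the identification of the correct parameter in the NSK setting: one must verify that the right analogue of ``bounding the degree by $N$'' is ``bounding the primes by $2^M\simeq q^N$'' (not, for instance, bounding their indices or their number). Once this correspondence is fixed via $M=\lfloor N\log_2 q\rfloor$, the computation is a direct combination of the prime number theorem with the Chebyshev/Erd\H{o}s estimate for the primorial.
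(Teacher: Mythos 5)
Your proposal is correct and follows essentially the same route as the paper's proof: the same identification of the carrier bound $2^M\simeq q^N$, the same use of the Erd\H{o}s primorial estimate $\prod_{p<m}p\sim e^m$ for the modulus size, the prime number theorem for the count of carriers, and the same final simplification via $\ln q\cdot\log_2 e=\log_2 q$. No discrepancies to note.
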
}

{
By comparing Propositions \ref{MSKprop} and \ref{NSKprop} it is now clear that the two information rates have the same behaviour.  This explains that the matching procedure we perform at the end of the previous section will attain the information rate of NSK also in the asymptotic limit. Moreover it justifies the claim on the large-$N$ behaviour of irreducible polynomials with respect to prime numbers. }

}

\subsubsection{Some precautions to avoid subgroup-like attacks}\label{security}
The security of this protocol is strictly related to the size of the degree of $h$ and, as a consequence, to  the range of degrees that the carriers can have. Indeed, when the carriers are chosen within a large set, the attacker will not have chances (in terms of a brute force attack) to find the $p_i$'s to set up a discrete logarithm problem for the pair $(p_i,p_i^s)$ for  any $i$.

As a matter of fact, the knowledge of $h$ will only lead to the following information on the degrees:
\[\deg(h)=\sum_i{\deg(p_i)}+1.\] 
This is not the case when working with integers and primes in $\vZ/p\vZ$, where we can always assume that the prime factors are known
when $p\simeq \prod_i p_i$.

We first sketch a subgroup like attack in the most \emph{unsafe} case.
Let $G$ be an abelian group and $p^v_1,\dots,p^v_L$ be carriers, as in Section \ref{poly}.
Let the order of $p^v_i$ in $G$ be $n_i$ and suppose $\gcd{(n_i,n_j)}=1$ for $i\neq j$.
Let now
\[M_j=n_1\cdots n_{j-1}\cdot n_{j+1}\cdots n_L.\]
It is easy to observe that, for a generic cyphertext $c$, $m_j=1$ if and only if $c^{M_j}\neq 1$. As it is elementary to observe, this leads to decryption in $L$ steps. Moreover, it can also be adapted to work when the condition $\gcd{(n_i,n_j)}=1$ is just partially fulfilled. {In this case, indeed, only partial information on the text can be extracted.}

Consider now the decomposition in cyclic subgroups of the multiplicative group of the finite field $(\vF_{q^d})^*$. In order to avoid subgroup-like attacks on the cyphertext we will require all the $p_i$'s to be generators of the same subgroup of large order. This will lead to certain requirements on $q^d-1$.

The most natural choice to solve this problem is asking that the degree $d$ of the reducing polynomial $h(x)$ be constrained by the following:
\begin{equation}\label{chone}
r:=\frac{q^d-1}{q-1}\ {\mathit{is\ prime}}.
\end{equation}

Now we have to choose the $p_i$'s such that
\begin{equation}\label{chtwo}
p_i(x)^r\neq 1 \mod{h(x)}\quad \forall i\in\{1,\dots,L\}.
\end{equation}
When these conditions are satisfied, all the $p_i$'s are generators for $(\vF_{q^d})^*$.

\subsection{``Chinese remainder'' version}\label{CRT}
In what follows we will present another example of a protocol that fits the general picture, which stems on the well known chinese remainder theorem. To do this, let us introduce a large prime power $q$ and a natural number $L\in \vN$. Consider now the monoid $S=(\vF_q^{L+1})^*$, with the multiplication defined componentwise, and the set $R=\{r_1,\dots,r_{L+1}\}\subseteq\vF_q$.

Let $\alpha_i\in \vF_q\backslash R\quad\forall i\in\{1,\dots,L\}$ and choose two large integers $u,v$ such that $uv=1\mod{(q-1)}$. Compute the following list of vectors $p_i\in(\vF_q^{L+1})^*$ as
\[
(g_i)_j:=(r_j-\alpha_i)
\]
\[
(p_i)_j:=(g_i)^v.
\]
Let 
\[((\vF_q^{L+1})^*, \{p_1,\dots,p_L\})\]
be the public key and
\[(\{g_1,\dots, g_L\}, \{r_1\dots r_L\})\]
be the secret key.
Let
\[F:\vZ_2^L \longrightarrow S\]
\[(m_1,\dots,m_L)\mapsto  \prod_{i=1}^L p_i^{m_i}\]
be the encryption map.
\begin{remark}
Observe that the information rate is
\[\frac{L}{(L+1)\log_2(q)}.\]
\end{remark}
\begin{prop}
$F$ is an injection.
\end{prop}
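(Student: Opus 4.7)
The plan is to reduce the equation $F(m)=F(n)$ to a polynomial identity in $\vF_q[x]$, and then exploit unique factorization in $\vF_q[x]$.

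First I would assume that $F(m)=F(n)$ for some $m,n\in\vZ_2^L$, i.e.\ $\prod_{i=1}^L p_i^{m_i}=\prod_{i=1}^L p_i^{n_i}$ in $S$. Since $uv\equiv 1\mod(q-1)$, Fermat's little theorem gives $x^{uv}=x$ for every $x\in\vF_q^*$, so raising both sides to the $u$-th power componentwise yields $\prod_{i=1}^L g_i^{m_i}=\prod_{i=1}^L g_i^{n_i}$ in $(\vF_q^{L+1})^*$. Reading this equality in the $j$-th coordinate and recalling that $(g_i)_j=r_j-\alpha_i$, I obtain, for every $j\in\{1,\dots,L+1\}$,
\begin{equation*}
\prod_{i=1}^L (r_j-\alpha_i)^{m_i}=\prod_{i=1}^L (r_j-\alpha_i)^{n_i}.
\end{equation*}

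Next I would introduce the two polynomials $P_m(x):=\prod_{i=1}^L (x-\alpha_i)^{m_i}$ and $P_n(x):=\prod_{i=1}^L (x-\alpha_i)^{n_i}$ in $\vF_q[x]$. Since $m_i,n_i\in\{0,1\}$, both polynomials have degree at most $L$. The coordinatewise equalities above say precisely that $P_m(r_j)=P_n(r_j)$ for all $j=1,\dots,L+1$. Because the $r_j$ are $L+1$ distinct elements of $\vF_q$, the polynomial $P_m-P_n$, of degree at most $L$, vanishes at more points than its degree, and hence $P_m=P_n$ in $\vF_q[x]$.

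Finally, since $\vF_q[x]$ is a unique factorization domain and the $\alpha_i$ are distinct (so that the linear factors $x-\alpha_i$ are pairwise non-associate irreducibles), the equality $P_m=P_n$ forces $m_i=n_i$ for every $i$, which gives injectivity of $F$. The main subtlety to verify is that the setup forces the $\alpha_i$ to be pairwise distinct; this is implicit in the construction (otherwise two carriers would coincide and the protocol would be trivially ill-defined), and it is also necessary to ensure that each $r_j-\alpha_i$ is nonzero, which is guaranteed by the hypothesis $\alpha_i\in\vF_q\setminus R$ so that $p_i$ actually lies in $S=(\vF_q^{L+1})^*$ and the reduction via the exponent $u$ is legal.
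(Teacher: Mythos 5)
Your proof is correct and is essentially the paper's argument: the paper phrases it as exhibiting the decryption map through the CRT isomorphism $\vF_q[x]/h_R(x)\cong\vF_q^{L+1}$ and noting that $\prod_i g_i$ has degree $L<L+1=\deg h_R$, which is exactly your interpolation step that a degree-$\le L$ polynomial is determined by its values at the $L+1$ points $r_j$, followed by the same appeal to unique factorization. Your explicit remarks that the $\alpha_i$ must be pairwise distinct and that $\alpha_i\notin R$ makes the componentwise $u$-th power legal are welcome clarifications the paper leaves implicit.
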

\begin{proof}
We define a polynomial on $\vF_q[x]$ by
\[
h_R(x):=\prod_{i=1}^{L+1}(x-r_i)
\]
whose set of zeros coincide with $R$.
We will prove the proposition by showing how to compute the inverse over the image of $F$ using $h(x)$, i.e. we will show how to uniquely decrypt any cyphertext $c\in \Im(F)$ using the secret key.
Let 
\[\psi: S \longrightarrow S\]
\[x\mapsto x^v\]
\[G: \vF_q[x]/h_R(x) \stackrel{\mathrm{CRT}}{\longrightarrow} \vF_q^L\]
\[k(x)\mapsto (k(r_1),\dots,k(r_L))\]
and 
\[\Gamma: \vF_q[x]/h_R(x) \longrightarrow \vF_q[x]\]
be the canonical lift.
The decryption map $D$ is given by checking $\Gamma( G^{-1}(\psi^{-1}(x)))$ modulo $g_i(x)=(x-\alpha_i)$: whenever it is zero it means $m_i=1$, where $ \psi^{-1}(x)= x^u$. Observe that the decryption is well defined: the map
\[\alpha_\sim^\psi: \vZ^L_2\longrightarrow \vF_q[x]/(h_R(x))\]
is clearly injective (and then $\alpha_\sim$ is, by Proposition \ref{injalpha}) since the product of all the $g_i(x)$ has degree $L<L+1$.
Observe that $\sim$ is as usual the relation induced by the ideal of $h_R(x)$.
\end{proof}

\section{Outlook and further research}
In the present communication we have given a new setting to produce many examples of knapsack encryption schemes, showing also how a remarkable example such as \cite{NSK}  perfectly fits our framework. We have proposed a next-to-simplest example when the monoid is chosen to be $(k[x],\cdot)$, one realization of which is given by $\vF_q[x]$ reduced by the ideal of an irreducible polynomial of suitable degree. 
{ 

This brand new application of the knapsack idea reproduces the key results presented in \cite{NSK} in terms of information rate, but allows us to improve some important features such as  
\begin{itemize}
\item the information rate is shown to be deterministic by providing an exact formula for it (cf. \cite[Section 2.2]{NSK}).
\item it reduces the computations over $\vF_{q^d}$ with $p\sim q^d$ but $q\ll p$, where $\vF_q$ is a field of small characteristic.
\end{itemize}

A non trivial variation of this scheme has been found, by taking into account a polynomial which splits over the base field and applying the chinese remainder theorem, allowing the computations to be performed over a direct sum of fields.

}
In \cite{NSK} Naccache and Stern conjectured that it might be possible to elliptic curve their scheme, and the new general framework we have presented might be of some help to address this problem.

Moreover, it would be interesting to see how the recent improvements to the NSK protocol presented in \cite{CHEV} may apply to our polynomial instance. This will be matter of further studies.

\section*{Acknowledgements}
The authors would like to thank Patrik K\"uhn, G\'erard Maze, Joachim Rosenthal and Davide Schipani for helpful discussions and suggestions. M.S. acknowledges partial support from SNF grant 200020\_149150/1.

\end{document}